\title{Pointwise Metrics for Clustering Evaluation}
\author[1]{Stephan van Staden}
\affil[1]{Google Switzerland GmbH}
\date{May 2024}
\newtheorem{theorem}{Theorem}
\newtheorem{lemma}[theorem]{Lemma}
\begin{document}

\maketitle

\begin{abstract}
This paper defines pointwise clustering metrics, a collection of metrics for characterizing the similarity of two clusterings. These metrics have several interesting properties which make them attractive for practical applications. They can take into account the relative importance of the various items that are clustered. The metric definitions are based on standard set-theoretic notions and are simple to understand. They characterize aspects that are important for typical applications, such as cluster homogeneity and completeness. It is possible to assign metrics to individual items, clusters, arbitrary slices of items, and the overall clustering. The metrics can provide deep insights, for example they can facilitate drilling deeper into clustering mistakes to understand where they happened, or help to explore slices of items to understand how they were affected. Since the pointwise metrics are mathematically well-behaved, they can provide a strong foundation for a variety of clustering evaluation techniques. In this paper we discuss in depth how the pointwise metrics can be used to evaluate an actual clustering with respect to a ground truth clustering.
\end{abstract}

{\bf Keywords:} Clustering evaluation, Clustering metrics, Clustering similarity, Clustering quality, Entity resolution evaluation, Pointwise metrics

\section{Introduction}

Clustering is the partitioning of a set of items into separate groups, called clusters. The items in each cluster should typically be similar, while the items from different clusters should be different. Although that sounds simple, there are significant challenges in many practical applications of clustering. For example, the criteria of what makes items similar or different might be complex and require human judgement. A clustering algorithm must then approximate that somehow. Moreover, if billions of items must be clustered, then there are typically also constraints such as computing time and cost.

In such complex settings, there is often no optimal clustering algorithm, and developers can experiment with many ideas to improve the quality of the clustering while satisfying the resource constraints. Understanding the resource constraints of a run of an algorithm is fairly straightforward. Understanding the resulting clustering itself, in particular its quality, can be challenging. Yet that is very important for effective development and for the consumers of the clustering.

There are several techniques for evaluating clusterings. In this paper we focus mostly on the evaluation of a clustering with respect to a ground truth clustering. A ground truth clustering consists of a small set of items that humans partitioned into ideal clusters. Once the human work is finished, the ground truth clustering can be stored and used repeatedly to evaluate actual clusterings. Such evaluations are fast; they provide developers with quick feedback. The downside is that the assessment does not consider the items that are not mentioned in the ground truth clustering. Other more expensive evaluation techniques can be used to fill that gap, but typically only after the developer is satisfied with the evaluation results of the ground truth clusterings.

We introduce the \textit{pointwise} clustering metrics, a rich set of metrics that can measure clustering similarity and quality. The pointwise metrics have several properties that make them attractive for practical applications:

\begin{itemize}
    \item The ability to specify the relative importance of each item. In many applications, not all items to be clustered are equally important.
    \item The metrics are intuitive and simple to understand. The definitions use standard set-theoretic notions. They characterize aspects that are important for practical applications such as cluster homogeneity and completeness.
    \item The ability to assign metrics to individual items, clusters, arbitrary slices of items, and the overall clustering. The metrics can provide deep insights, for example they can facilitate drilling deeper into clustering mistakes to understand where they happened, or exploring slices of items to understand how they were affected.
    \item The pointwise metrics provide a strong mathematical foundation for a variety of clustering evaluation techniques. The definitions are elegant and mathematically well-behaved, which makes them an ideal basis for obtaining statistical estimates of the quality of huge clusterings, for example. In this paper we discuss in depth how the pointwise metrics can be used to evaluate an actual clustering with respect to a ground truth clustering.
\end{itemize}

\section{Clusterings and clustering algorithms}

Given a finite set of items $S$ and an equivalence relation $r$ (i.e. a binary relation that is reflexive, symmetric and transitive), a \textit{cluster} is an equivalence class of $S$ with respect to $r$, and a \textit{clustering} is the set of all clusters, i.e. a partitioning of $S$ into its equivalence classes.

The equivalence relation $r$ can often be thought of as being induced by a labeling of items that indicate some classification: suppose there is a function $\mathrm{label} : S \rightarrow \mathrm{L}$, which assigns every item a label from a discrete set of labels $\mathrm{L}$, then $(i_1, i_2) \in r  \Longleftrightarrow  \mathrm{label}(i_1) = \mathrm{label}(i_2)$ is its induced equivalence relation.

In practical applications, the set of items $S$ can be very large and the ideal equivalence relation is not fully known. Humans can consider a pair of items and say whether they are equivalent or not, but since that does not scale to billions of items, we have only very sparse information about the ideal relation. The main job of a clustering algorithm in such a setting is to approximate the ideal equivalence relation. This is typically done by 1) deciding which items might be related (also called `blocking'), and 2) deciding which of these are equivalent according to a computable function that imitates the human judgements. The design space of clustering algorithms is consequently huge, and it becomes desirable to be able to evaluate the clustering results to determine which algorithm and configuration to prefer in practice.

\section{Ground truth clusterings and pointwise metrics}

A ground truth clustering partitions a set of items $S$ into ideal clusters. This is normally done by humans, so the set $S$ tends to be relatively small. For an item $i \in S$, we denote the set of the members of the cluster containing $i$ as $\mathit{IdealCluster}_S(i)$. It always holds that $i \in \mathit{IdealCluster}_S(i)$.

The pointwise metrics take into account the relative importance of the various items. The relative importance is specified by providing a \textit{weight} for each item. Exactly how the weight is determined is application-specific; the pointwise metrics simply require that each weight must be a positive real number. We denote the weight of item $i$ by $\mathit{weight}(i)$, and use the shorthand $\mathit{weight}(I) = \sum_{i \in I} \mathit{weight}(i)$ for a set of items $I$.

We would like to be able to use the ground truth clustering to evaluate a given \textit{actual} clustering of items $S^\prime$, where $S^\prime$ might contain billions of items. For an item $i \in S^\prime$, we denote the set of the members of the actual cluster containing $i$ as $\mathit{ActualCluster}_{S^\prime}(i)$.

If $S$ and $S^\prime$ have no item in common, then an evaluation isn't meaningful because the two clusterings talk about completely different items. Otherwise, we evaluate how much the two clusterings agree by focusing on the set of items that they have in common, namely the items $i$ in $T = S \cap S^\prime$, for which we define:

\begin{align}
    \mathit{IdealCluster}(i) &= \mathit{IdealCluster}_S(i) \cap T \\
    \mathit{ActualCluster}(i) &= \mathit{ActualCluster}_{S^\prime}(i) \cap T
\end{align}

Henceforth, we consider only the set of weighted items $T$ and its two clusterings specified by the functions $\mathit{IdealCluster}$ and $\mathit{ActualCluster}$ respectively.

From the perspective of each item $i \in T$, we can partition $T$ into four sets as follows:

\begin{align}
\mathit{TruePositives}(i) &= \mathit{IdealCluster}(i) \cap \mathit{ActualCluster}(i) \\
\mathit{FalsePositives}(i) &= \mathit{ActualCluster}(i) \setminus \mathit{IdealCluster}(i) \\
\mathit{FalseNegatives}(i) &= \mathit{IdealCluster}(i) \setminus \mathit{ActualCluster}(i) \\
\mathit{TrueNegatives}(i) &= T \setminus [\mathit{TruePositives}(i) \cup \mathit{FalsePositives}(i) \cup \mathit{FalseNegatives}(i)]
\end{align}

This characterization leads immediately to the 2x2 confusion matrix from the perspective of $i$:

\begin{align}
\mathit{TP}(i) &= \mathit{weight}(\mathit{TruePositives}(i)) \\
\mathit{FP}(i) &= \mathit{weight}(\mathit{FalsePositives}(i)) \\
\mathit{FN}(i) &= \mathit{weight}(\mathit{FalseNegatives}(i)) \\
\mathit{TN}(i) &= \mathit{weight}(\mathit{TrueNegatives}(i))
\end{align}

The 2x2 confusion matrix can be used to compute various metrics of interest, for example:

\begin{align}
\mathit{Precision}(i) &= \frac{\mathit{TP}(i)}{\mathit{TP}(i) + \mathit{FP}(i)} \\
\mathit{Recall}(i) &= \frac{\mathit{TP}(i)}{\mathit{TP}(i) + \mathit{FN}(i)} \\
\mathit{Accuracy}(i) &= \frac{\mathit{TP}(i) + \mathit{TN}(i)}{\mathit{TP}(i) + \mathit{TN}(i) + \mathit{FP}(i) + \mathit{FN}(i)}
\end{align}

Note that these definitions are the standard ones for 2x2 confusion matrices. Here, they characterize the difference in the two clusterings from the perspective of each item $i$.

It is also instructive to view the 2x2 confusion matrix as a Venn diagram, which makes the relationship with the clustering situation very direct. That is done in Figure~\ref{confusion_matrix_venn_diagram}. The Venn diagram makes it easy to see how the standard definitions of the Jaccard Distance and the Jaccard Index applies to individual items:

\begin{align}
\mathit{JaccardDistance}(i) &= \frac{\mathit{FN}(i) + \mathit{FP}(i)}{\mathit{FN}(i) + \mathit{TP}(i) + \mathit{FP}(i)} \\
\mathit{JaccardIndex}(i) &= \frac{\mathit{TP}(i)}{\mathit{FN}(i) + \mathit{TP}(i) + \mathit{FP}(i)}
\end{align}

As usual, $\mathit{JaccardDistance}(i) = 1 - \mathit{JaccardIndex}(i)$.

\begin{figure*}[t!]
\centering
\begin{tikzpicture}[fill=gray]
\draw (-1,0) circle (1.5)
      (-1,1.5)  node [text=black,above] {Ideal}
      (1,0) circle (1.75)
      (1,1.75)  node [text=black,above] {Actual}
      (-4,-3) rectangle (4,3)
      (-0.1,3) node [text=black,above] {$\mathit{weight}(T)$}
      % (-0.1,-1) node [text=black,above] {$i$}
      (-0.1,0) node [text=black] {$\mathit{TP}(i)$}
      (-1.6,0) node [text=black] {$\mathit{FN}(i)$}
      (1.6,0) node [text=black] {$\mathit{FP}(i)$}
      (-0.1,-2.3) node [text=black] {$\mathit{TN}(i)$}
      ;
\end{tikzpicture}
\caption{The Venn diagram of the 2x2 clustering confusion matrix from the perspective of item $i$. The item $i$ is always in the intersection of $ \mathit{IdealCluster}(i)$ and $\mathit{ActualCluster}(i)$. Hence $\mathit{weight}(i) \subseteq \mathit{TP}(i)$, so the the weight of $i$ is contained in the intersection, which is always non-empty. The left circle is labeled with Ideal, which is a shorthand for $\mathit{weight}(\mathit{IdealCluster}(i))$. The right circle is labeled with Actual, which is a shorthand for $\mathit{weight}(\mathit{ActualCluster}(i))$.}\label{confusion_matrix_venn_diagram}
\end{figure*}
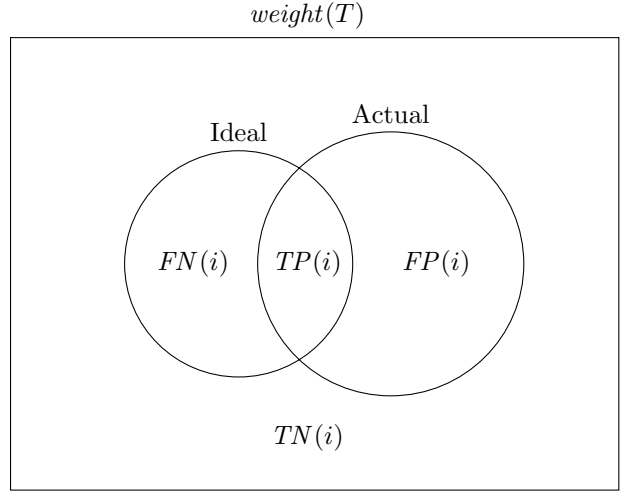

The pointwise formulation makes it easy to assign metrics, such as $\mathit{Precision}$, $\mathit{Recall}$ and $\mathit{JaccardDistance}$, to arbitrary sets of items. Given $I \subseteq T$, we define:

\begin{align}
\mathit{Precision}(I) &= \frac{\sum_{i \in I} \mathit{weight}(i) \mathit{Precision}(i)}{\mathit{weight}(I)} \\
\mathit{Recall}(I) &= \frac{\sum_{i \in I} \mathit{weight}(i) \mathit{Recall}(i)}{\mathit{weight}(I)} \\
\mathit{JaccardDistance}(I) &= \frac{\sum_{i \in I} \mathit{weight}(i) \mathit{JaccardDistance}(i)}{\mathit{weight}(I)}
\end{align}

So $\mathit{Precision}(I)$ is the weighted average $\mathit{Precision}$ of the items in $I$, i.e. the expected $\mathit{Precision}$ of an item in $I$. Taking $I = T$, we obtain an overall $\mathit{Precision}$ metric. We can also obtain $\mathit{Precision}$ metrics for particular slices of items. And we can obtain $\mathit{Precision}$ metrics for individual clusters. The latter can be useful, for example, to see which actual clusters have the worst $\mathit{Precision}$, or which ideal clusters have the worst $\mathit{Recall}$. Working with expected values (i.e. weighted averages) makes the metrics simple to understand and intuitively meaningful.

Other metrics can get the same treatment as $\mathit{Precision}$ did in the previous paragraph. We mention $\mathit{Precision}$ in particular, because it characterizes the homogeneity of clusters. It is often important for practical applications to use clusters that don't mix together too many unrelated items. $\mathit{Recall}$ is the counterpart metric that measures the completeness of clusters. The two go hand-in-hand: we can always easily improve one, but then the other will typically suffer. For example, we can obtain 100\% $\mathit{Precision}$ by putting each item in its own cluster, but then $\mathit{Recall}$ will suffer. Or we can put all the items together in a single cluster and obtain 100\% $\mathit{Recall}$, but then $\mathit{Precision}$ will suffer. Because of their practical usefulness, $\mathit{Precision}$, $\mathit{Recall}$ and $\mathit{JaccardDistance}$ will feature quite heavily in the rest of this paper.

To make this more concrete, Figure~\ref{small_clustering_example} shows how the pointwise metrics work for a toy example. We note that:
\begin{itemize}
    \item The $\mathit{Precision}$ of $i_1$ measures the weight fraction of items in $i_1$'s actual cluster that are shared with $i_1$'s ideal cluster. \\
    The actual cluster of $i_1$ has weight 4, and it shares $\{i_1\}$, with weight 1, with the ideal cluster of $i_1$. Hence $\mathit{Precision}(i_1) = 1/4$.
    \item The $\mathit{Recall}$ of $i_1$ measures the weight fraction of items in $i_1$'s ideal cluster that are shared with $i_1$'s actual cluster. \\
    The ideal cluster of $i_1$ has weight 3, and it shares $\{i_1\}$, with weight 1, with the actual cluster of $i_1$. Hence $\mathit{Recall}(i_1) = 1/3$.
    \item The $\mathit{Precision}$ of the actual cluster $\{i_1, i_3\}$ is the expected $\mathit{Precision}$ of its items. \\
    Item $i_1$ has weight 1 and its $\mathit{Precision}$ is 1/4. \\
    Item $i_3$ has weight 3 and its $\mathit{Precision}$ is 3/4. \\
    Hence the $\mathit{Precision}$ of the actual cluster $\{i_1, i_3\}$ is $\frac{1 \cdot \frac{1}{4} + 3 \cdot \frac{3}{4}}{1 + 3} = \frac{5}{8}$.
    \item Overall $\mathit{Precision}$ is the expected $\mathit{Precision}$ over all items in $T$. For the toy example, we see it is 75\%.
\end{itemize}

One might alternatively, but equivalently, report $\mathit{OverMergeRate}$ and $\mathit{UnderMergeRate}$ metrics. The formulae are simple:

\begin{align}
\mathit{OverMergeRate} &= 1 - \mathit{Precision} \\
\mathit{UnderMergeRate} &= 1 - \mathit{Recall}
\end{align}

With these definitions, it is straightforward to report $\mathit{OverMergeRate}$ and $\mathit{UnderMergeRate}$ metrics on the item, cluster, slice, and overall levels. While that might be more intuitive for some people, we will mostly use $\mathit{Precision}$ and $\mathit{Recall}$ in this paper.

\begin{figure*}[t!]
    \centering
    \begin{subfigure}[t]{0.5\textwidth}
        \centering
        \begin{dot2tex}[dot,mathmode]
            graph {
        		subgraph cluster_1 {
        		    style=rounded
        			node [color=yellow, style=filled]
        			i2 [label=i_2]
        			i1 [label=i_1]
        		}
        		subgraph cluster_2 {
        		    style=rounded
        			node [color=cyan, style=filled]
        			i3 [label=i_3]
        		}
        	}
        \end{dot2tex}
        \caption{The ideal clustering of three items.}
    \end{subfigure}%
    ~
    \begin{subfigure}[t]{0.5\textwidth}
        \centering
        \begin{dot2tex}[dot,mathmode]
            graph {
        		subgraph cluster_1 {
        		    style=rounded
        			node [color=cyan, style=filled]
        			i3 [label=i_3]
        			node [color=yellow, style=filled]
        			i1 [label=i_1]
        		}
        		subgraph cluster_2 {
        		    style=rounded
        			node [color=yellow, style=filled]
        			i2 [label=i_2]
        		}
        	}
        \end{dot2tex}
        \caption{The actual clustering of the items.}
    \end{subfigure}
    ~
    \par\bigskip
    \begin{subfigure}[t]{\textwidth}
    \centering
    \begin{tabular}{|c|c|}
        \hline
              & $\mathit{weight}$ \\
        \hline
        $i_1$ & 1.0 \\
        $i_2$ & 2.0 \\
        $i_3$ & 3.0 \\
        \hline
    \end{tabular}
    \caption{The weights of the items.}
    \end{subfigure}
    ~
    \par\bigskip
    \begin{subfigure}[t]{\textwidth}
    \centering
    \begin{tabular}{|c|c|c|c|c|}
        \hline
              & $\mathit{TruePositives}$ & $\mathit{FalsePositives}$ & $\mathit{FalseNegatives}$ & $\mathit{TrueNegatives}$  \\
        \hline
        $i_1$ & $\{i_1\}$ &  $\{i_3\}$ &  $\{i_2\}$ &  $\emptyset$ \\
        $i_2$ &  $\{i_2\}$ & $\emptyset$ &  $\{i_1\}$ &  $\{i_3\}$ \\
        $i_3$ & $\{i_3\}$ & $\{i_1\}$ & $\emptyset$ & $\{i_2\}$ \\
        \hline
    \end{tabular}
    \caption{The classification of all items from the perspective of each item.}
    \end{subfigure}
    ~
    \par\bigskip
    \begin{subfigure}[t]{\textwidth}
    \centering
    \begin{tabular}{|c|c|c|c|c|}
        \hline
              & $\mathit{TP}$ & $\mathit{FP}$ & $\mathit{FN}$ & $\mathit{TN}$ \\
        \hline
        $i_1$ & 1 & 3 & 2 & 0 \\
        $i_2$ & 2 & 0 & 1 & 3 \\
        $i_3$ & 3 & 1 & 0 & 2 \\
        \hline
    \end{tabular}
    \caption{The entries of the 2x2 confusion matrix of each item.}
    \end{subfigure}
    ~
    \par\bigskip
    \begin{subfigure}[t]{\textwidth}
    \centering
    \begin{tabular}{|c|c|c|c|}
        \hline
              & $\mathit{Precision}$ & $\mathit{Recall}$ & $\mathit{JaccardDistance}$ \\
        \hline
        $i_1$ & 1/4 (25\%)  & 1/3 (33.33\%)   & 5/6 (83.33\%) \\
        $i_2$ & 1 (100\%) & 2/3 (66.67\%)   & 1/3 (33.33\%) \\
        $i_3$ & 3/4 (75\%)  & 1 (100\%)  & 1/4 (25\%) \\
        \hline
    \end{tabular}
    \caption{Selected clustering metrics from the perspective of each item.}
    \end{subfigure}
    ~
    \par\bigskip
    \begin{subfigure}[t]{\textwidth}
    \centering
    \begin{tabular}{|ll|c|c|c|}
        \hline
             & & $\mathit{Precision}$ & $\mathit{Recall}$ & $\mathit{JaccardDistance}$ \\
        \hline
        Overall & $T = \{i_1, i_2, i_3\}$ & 3/4 (75\%) & 7/9 (77.78\%) & 3/8 (37.5\%) \\
        Ideal cluster & $\{i_1, i_2\}$ & 3/4 (75\%) & 5/9 (55.56\%) & 1/2 (50\%) \\
        Actual cluster & $\{i_1, i_3\}$ & 5/8 (62.5\%) & 5/6 (83.33\%) & 19/48 (39.58\%) \\
        Item slice & $\{i_2, i_3\}$   & 17/20 (85\%) & 13/15 (86.67\%) & 17/60 (28.33\%) \\
        Ideal cluster & $\{i_3\}$       & 3/4 (75\%) & 1 (100\%) & 1/4 (25\%) \\
        \hline
    \end{tabular}
    \caption{Selected clustering metrics for aggregates of items.}
    \end{subfigure}
    \caption{Pointwise clustering metrics in action.}\label{small_clustering_example}
\end{figure*}

% Weighted averages script (python3):

% import fractions

% def WeightedAverage(weights_and_values):
%   numerator = 0
%   denominator = 0
%   for weight, value in weights_and_values:
%     numerator += fractions.Fraction(weight) * fractions.Fraction(value)
%     denominator += fractions.Fraction(weight)
%   return numerator / denominator

% i_1 = {'weight': 1, 'Precision': fractions.Fraction(1,4), 'Recall': fractions.Fraction(1,3), 'JaccardDistance': fractions.Fraction(5,6)}
% i_2 = {'weight': 2, 'Precision': fractions.Fraction(1), 'Recall': fractions.Fraction(2,3), 'JaccardDistance': fractions.Fraction(1,3)}
% i_3 = {'weight': 3, 'Precision': fractions.Fraction(3,4), 'Recall': fractions.Fraction(1), 'JaccardDistance': fractions.Fraction(1,4)}

% def PrintMetrics(item_set):
%   line = ''
%   for metric_name in ['Precision', 'Recall', 'JaccardDistance']:
%     weights_and_values = [(item['weight'], item[metric_name]) for item in item_set]
%     wa = WeightedAverage(weights_and_values)
%     print(f'{metric_name}: {wa}\t({(wa.numerator / wa.denominator):.2%})')
%     line += f' & {wa} ({(wa.numerator / wa.denominator * 100):.2f}\\%)'
%   line += ' \\\\'
%   print(line)

% PrintMetrics([i_1, i_2, i_3])

\subsection{Mathematical properties}

The simplicity of the pointwise definitions make the clustering metrics easy to interpret. They also make the metrics mathematically well-behaved, as is evident from this selection of properties:

\begin{figure*}[t!]
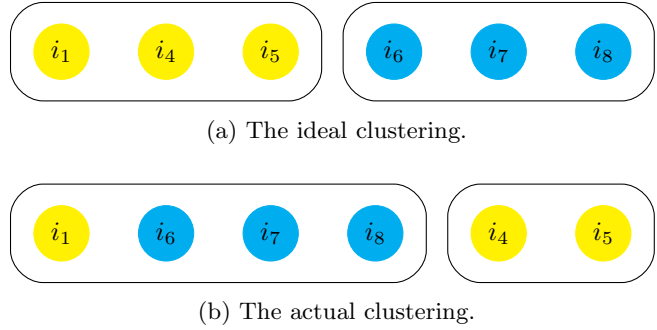

    \centering
    \begin{subfigure}[t]{\textwidth}
        \centering
        \begin{dot2tex}[dot,mathmode]
            graph {
        		subgraph cluster_1 {
        		    style=rounded
        			node [color=yellow, style=filled]
        			i5 [label=i_5]
        			i4 [label=i_4]
        			i1 [label=i_1]
        		}
        		subgraph cluster_2 {
        		    style=rounded
        			node [color=cyan, style=filled]
        			i8 [label=i_8]
        			i7 [label=i_7]
        			i6 [label=i_6]
        		}
        	}
        \end{dot2tex}
        \caption{The ideal clustering.}
    \end{subfigure}
    ~
    \par\bigskip
    \begin{subfigure}[t]{\textwidth}
        \centering
        \begin{dot2tex}[dot,mathmode]
            graph {
        		subgraph cluster_1 {
        		    style=rounded
        			node [color=cyan, style=filled]
        			i8 [label=i_8]
        			i7 [label=i_7]
        			i6 [label=i_6]
        			node [color=yellow, style=filled]
        			i1 [label=i_1]
        		}
        		subgraph cluster_2 {
        		    style=rounded
        			node [color=yellow, style=filled]
        			i5 [label=i_5]
        			i4 [label=i_4]
        		}
        	}
        \end{dot2tex}
        \caption{The actual clustering.}
    \end{subfigure}
    \caption{Clusterings like those of Figure~\ref{small_clustering_example} but where the items have been subdivided: $i_2$ was divided into $i_4$ and $i_5$, while $i_3$ was divided into $i_6$, $i_7$ and $i_8$. If the new items all have weight 1, then overall $\mathit{Precision} = 3/4$, overall $\mathit{Recall} = 7/9$ and overall $\mathit{JaccardDistance} = 3/8$, as was the case in Figure~\ref{small_clustering_example}.}\label{subdivided_items}
\end{figure*}

\begin{enumerate}
\item
Only the relative magnitudes of the weights matter. For example, the weights of the items in Figure~\ref{small_clustering_example} can be multiplied by any positive constant factor, and its overall $\mathit{Precision}$ and $\mathit{Recall}$ and $\mathit{JaccardDistance}$ metrics will remain exactly the same.

\item
Expected values compose nicely. Some instances of this:
\begin{itemize}
    \item The pointwise metric of a singleton set is equal to the same pointwise metric of the item it contains. \\
    Example application in Figure~\ref{small_clustering_example}: The $\mathit{Precision}$ of the actual cluster $\{i_2\}$ is equal to $\mathit{Precision}(i_2) = 1$.
    \item For a set of items $I$, any pointwise metric, such as $\mathit{Precision}(I)$, is equal to the expected value of the same pointwise metric of a partitioning of $I$. \\ For instance, the overall $\mathit{Precision}$ is equal to the weighted average $\mathit{Precision}$ of the clusters in a given clustering (remember that the weight of a cluster is equal to the sum of the weights of its members). \\
    Example application in Figure~\ref{small_clustering_example}: The overall $\mathit{Precision}$ is 3/4. The clusters of the actual clustering are $\{i_1, i_3\}$ and $\{i_2\}$, which have weights 4 and 2 respectively, and $\mathit{Precision}$ metrics of 5/8 and 1 respectively. The weighted average of the clusters' $\mathit{Precision}$ is $\frac{4 \cdot \frac{5}{8} + 2 \cdot 1}{4 + 2} = 3/4$, which is equal to the overall $\mathit{Precision}$.
\end{itemize}

\item
For all $i$, all items in $\mathit{TruePositives}(i)$ will have exactly the same 2x2 confusion matrix and hence the same pointwise metrics.

\item
Subdividing or fusing items in the same $\mathit{TruePositives}$ set won't affect the overall metrics as long as the weight of the set stays constant. \\
Example: The clusterings in Figure~\ref{subdivided_items} are basically those of Figure~\ref{small_clustering_example} where each item has been divided into a number of items equal to its weight. If all the items in Figure~\ref{subdivided_items} have weight 1, then its overall pointwise metrics will be exactly the same as those of Figure~\ref{small_clustering_example}.

% Another property:
% An ideal cluster's pointwise recall metric, and its items' pointwise recall metrics, remain the same if all the items of the ideal cluster have their weight scaled by the same positive factor.

\item
The definitions of $\mathit{Precision}$ and $\mathit{Recall}$ are perfectly symmetric. Given two clusterings $C_1$ and $C_2$, if we consider $C_1$ as the ideal clustering and $C_2$ as the actual clustering and compute the $\mathit{Precision}$ of some set of items $I$, then that will be equal to the $\mathit{Recall}$ of $I$ when the roles of $C_1$ and $C_2$ are swapped, i.e. when $C_2$ is the ideal clustering and $C_1$ is the actual clustering. Formally, we can express this as:
$$
\mathit{Precision}(C_1, C_2) = \mathit{Recall}(C_2, C_1)
$$
where $\mathit{Precision}(C_1, C_2)$ is the function that maps $I$ to its $\mathit{Precision}$ when $C_1$ is treated as the ideal clustering and $C_2$ is treated as the actual clustering.

\item
$\mathit{JaccardDistance}$ is a true distance metric on the set of all clusterings of a fixed set of weighted items. \\
Stated otherwise: Let $C_1$, $C_2$, $C_3$ denote clusterings of the same weighted items. Let $\mathit{JaccardDistance}(C_1, C_2)$ denote the overall $\mathit{JaccardDistance}$ between clustering $C_1$ and clustering $C_2$. We have:
\begin{itemize}
    \item Identity of indiscernibles: \\ $\mathit{JaccardDistance}(C_1, C_2) = 0 \Longleftrightarrow C_1 = C_2$
    \item Symmetry: \\ $\mathit{JaccardDistance}(C_1, C_2) = \mathit{JaccardDistance}(C_2, C_1)$
    \item Triangle inequality: \\ $\mathit{JaccardDistance}(C_1, C_2) + \mathit{JaccardDistance}(C_2, C_3) \geq \mathit{JaccardDistance}(C_1, C_3)$
\end{itemize}
Proof outline: $\mathit{SymmetricDifference}(I_1, I_2) = \mathit{weight}((I_1 \setminus I_2) \cup (I_2 \setminus I_1))$ is a true distance metric on $P(T)$, i.e. sets of items, and by the Steinhaus Transform $\mathit{JaccardDistance}(I_1, I_2)$ is consequently a true distance metric on $P(T)$. 
Let $C_1$ and $C_2$ be two clusterings of $T$, and let $C_1(i)$ denote the cluster of $C_1$ that contains the item $i$. It is quite easy to prove that the lifted $\mathit{JaccardDistance}(C_1, C_2)$, defined as the weighted average of $\mathit{JaccardDistance}(i) = \mathit{JaccardDistance}(C_1(i), C_2(i))$ over all $i \in T$, is a true distance metric on clusterings. See Appendix~\ref{appendix_jaccard_distance_true_metric} for details.

The result is general: \textit{every} true distance metric on sets of weighted items can be lifted to clusterings, where the lifted metric is equal to the weighted average (i.e. expected value) of the original metric over all the items.

\end{enumerate}

\subsection{Practical considerations}

In practice, a ground truth clustering of a set of weighted items can be constructed once, stored, and used many times to evaluate actual clusterings. The big benefit of such a setup is that the human judgements are effectively reused, so their cost is amortized, and they facilitate quick feedback during experimentation (no new judgements are needed to get metrics for a new actual clustering).

In many practical applications, the population of items that must be clustered changes over time, and hence the items it has in common with the ground truth clustering can also change over time. For such applications it makes sense to report auxiliary metrics, such as the number and weight of the items that are common, and/or the number and weight of the items in the ground truth clustering that are not present in the actual clustering, in order to communicate the degree to which the ground truth clustering is still applicable.

A ground truth clustering effectively acts as a specification, and the clustering metrics characterize the degree to which the actual clustering meets the specification. In practice the specification can be very partial and omits the vast majority of the actual clustering's items. Some ground truth clusterings explicitly target a particular slice of items. It is important to understand that the metrics don't necessarily extrapolate well to items that are not included in the specification.

The ground truth clustering provides a point of reference for evaluation purposes. Suppose we want to evaluate an A/B test between a Baseline clustering algorithm and an Experiment clustering algorithm. To do that, we can run both on the same population of items, and obtain a Baseline actual clustering and an Experiment actual clustering. Next, we can evaluate each of these actual clusterings against the same ground truth clustering. It acts as a point of reference, so now we can investigate and compare how the two algorithms processed the items of a given ground truth cluster. We will typically also want to know which algorithm is better. For that, we can look at the delta metrics, for example $\Delta\mathit{Precision} = \mathit{Precision}_\mathrm{Experiment} - \mathit{Precision}_\mathrm{Baseline}$. If there are two more variants of the clustering algorithm we would like to evaluate, then we can perform two more evaluations with the ground truth clustering. Because the ground truth clustering acts as a point of reference, we can easily obtain delta metrics between all 6 pairs of algorithms with a grand total of only 4 evaluations with the ground truth clustering.

Storing the pointwise metrics for the individual items can facilitate easy aggregation after an evaluation run. That makes it easy to drill into the metrics, for example to understand for which items the mistakes happened, and to report metrics for particular slices of items.

\section{Related work}

There are many definitions of clustering quality and similarity metrics in the academic literature. There are also nice overview articles, for example~\cite{entity_resolution_evaluation_measures,stackexchange_measures_to_compare_clustering_partitions}. Here, we briefly discuss the relationship with two large groups of metrics outlined in~\cite{stackexchange_measures_to_compare_clustering_partitions}: metrics based on a co-membership confusion matrix, and metrics based on a frequency cross-tabulation of items. These groups include many popular metrics, such as Adjusted Rand (ARAND) and the V-measure.

\subsection{Metrics based on a co-membership confusion matrix}

The co-membership confusion matrix considers how pairs of items are co-occur in clusters in the ideal and actual clusterings. It constructs a single 2x2 confusion matrix on the basis of that information. In particular, it first classifies pairs of distinct items as follows:
\begin{itemize}
    \item $\mathrm{TruePositives}$: the pairs of items that share an ideal cluster and are also in the same actual cluster.
    \item $\mathrm{FalseNegatives}$: the pairs of items that share an ideal cluster but are in separate actual clusters.
    \item $\mathrm{FalsePositives}$: the pairs of items that are in separate ideal clusters, but in the same cluster in the actual clustering.
    \item $\mathrm{TrueNegatives}$: the pairs of items that are in separate ideal clusters and also in separate actual clusters.
\end{itemize}
It then creates a 2x2 confusion matrix by using $|\mathrm{TruePositives}|$, etc. for the entries, and then it can compute a host of metrics from the confusion matrix.

The confusion matrix only mentions pairs of distinct items. Hence, for example, an item that is a in a singleton cluster in both clusterings (i.e. an item that is in a cluster on its own in the ideal clustering and in the actual clustering) will participate only in $\mathrm{TrueNegatives}$ and never in $\mathrm{TruePositives}$. So it cannot influence $\mathit{Precision}$ at all. A big cluster of size $n$ in the ideal clustering that is preserved in the actual clustering will contribute $O(n^2)$ pairs of items to $\mathrm{TruePositives}$, so it can have an amplified influence on $\mathit{Precision}$.

The pointwise metrics of this paper can also be expressed in terms of pairs (not only pairs of distinct items -- self-pairs also play a role), but the set-theoretical presentation above is far simpler and easier to understand. It assigns metrics directly to individual items, which makes it easy to report metrics for aggregates such as slices or clusters. It deals well with singleton clusters, and the influence of a large cluster on the overall metrics is regulated by its weight, which is the sum of the weights of its items, as one would expect.

We note that any desirable metric definition of the co-membership confusion matrix approach can be used directly in the pointwise approach: every metric that is defined on a 2x2 confusion matrix, such as Adjusted Rand (ARAND), can be computed for each item, and lifted to aggregates of items with expected values.

\subsection{Metrics based on a frequency cross-tabulation}

The second family of metrics is based on a cross-tabulation of items. The standard formulation uses frequencies, but we can easily generalize it to use weights. In the following matrix, the rows represent ideal clusters and the columns represent actual clusters, and the entry $w_{ij}$ denotes the weight of the intersection of ideal cluster $i$ and actual cluster $j$:
$$
\begin{bmatrix}
w_{11} & w_{12} & \cdots & w_{1j} & \cdots & w_{1M} \\
w_{21} & w_{22} & \cdots & w_{2j} & \cdots & w_{2M} \\
\vdots & \vdots & \ddots & \vdots &        & \vdots \\
w_{i1} & w_{i2} & \cdots & w_{ij} & \cdots & w_{iM} \\
\vdots & \vdots &        & \vdots & \ddots & \vdots \\
w_{N1} & w_{N2} & \cdots & w_{Nj} & \cdots & w_{NM} 
\end{bmatrix}
$$

The pointwise metrics of this paper can be formulated in terms of this matrix: every item in the intersection of ideal cluster $i$ and actual cluster $j$ has a 2x2 confusion matrix, which summarizes the clustering situation from its perspective, specified by:
\begin{itemize}
    \item $\mathit{TP} = w_{ij}$
    \item $\mathit{FP} = \left(\sum_{h = 1}^N w_{hj}\right) - w_{ij}$
    \item $\mathit{FN} = \left(\sum_{k = 1}^M w_{ik}\right) - w_{ij}$
    \item $\mathit{TN} = \left(\sum_{h = 1}^N\sum_{k = 1}^M w_{hk}\right) - \mathit{TP} - \mathit{FP} - \mathit{FN}$
\end{itemize}
The per-item metrics are then defined on the basis of its 2x2 confusion matrix, and lifted to sets of items using expected values (weighted averages).

Many metrics in the literature are based on a cross-tabulation of items. Among them, the F-measure~\cite{fung2003hierarchical} and the B-CUBED~\cite{bagga1998algorithms} family are the most closely related to the pointwise metrics of this paper. We discuss each of them in turn, and also briefly touch upon the V-measure~\cite{rosenberg-hirschberg-2007-v}.

The F-measure~\cite{fung2003hierarchical}, also known as `F Clustering Accuracy', is spot on with its definitions of $\mathit{Precision}_{ij}$ and $\mathit{Recall}_{ij}$. It computes $F_{ij}$ as the harmonic mean of $\mathit{Precision}_{ij}$ and $\mathit{Recall}_{ij}$. The $F_{ij}$ values are then aggregated: the ideal cluster $h$ is associated with a score, namely $\max_{j \in 1..M}\{F_{hj}\}$, and the F-measure of the overall clustering is defined as the weighted average of the scores of the ideal clusters. So, in contrast to the pointwise metrics, it does not define a 2x2 confusion matrix for $ij$, it doesn't compute per-item metrics, and its aggregation is different.

B-CUBED~\cite{bagga1998algorithms}, also written $B^3$, is another related approach based on a cross-tabulation of items. It assigns each item a $\mathit{Precision}$ and $\mathit{Recall}$ value, where the definitions of $\mathit{Precision}$ and $\mathit{Recall}$ are the same as the pointwise ones of this paper when items are unweighted (i.e. all items have an equal weight). It then computes overall $\mathit{Precision}$ and $\mathit{Recall}$ metrics as the weighted averages of the per-item metrics. So it uses weighted averages in the aggregation, but weights don't play a role in the per-item metrics. It does not make the full 2x2 confusion matrix explicit for each item, and other natural metrics for 2x2 confusion matrices, such as $\mathit{Accuracy}$ and $\mathit{JaccardDistance}$, are not mentioned.

The V-measure~\cite{rosenberg-hirschberg-2007-v} is a popular metric for measuring the similarity of two clusterings. It is also based on a cross-tabulation of items. It is defined as the (weighted) harmonic mean of $\mathit{Homogeneity}$ and $\mathit{Completeness}$, whose definitions are based on information theory. The conceptual counterparts in the pointwise metrics are $\mathit{Precision}$ and $\mathit{Recall}$. The pointwise formulation makes it straightforward to support also other metrics of 2x2 confusion matrices, such as $\mathit{Informedness}$ and $\mathit{Markedness}$~\cite{powers2011evaluation} and $\mathit{Accuracy}$, and the pointwise $\mathit{JaccardDistance}$ is a true distance metric for clusterings. It is not immediately clear whether the V-measure's information-theoretic framework can support such notions, and whether it can supply per-item metrics that compose nicely.

\section{Conclusion}

There are many definitions of clustering quality and similarity in the academic literature. The definitions we discussed in this paper allow us to:
\begin{itemize}
    \item Associate weights, or importances, to items, and to accommodate that in the metrics.
    \item Associate metrics with individual items, clusters, and arbitrary slices, which is useful for debugging and drilling deeper. The metrics are simple to understand and intuitive.
    \item Use the core theory as a foundation of clustering similarity and quality. The metrics are mathematically well-behaved, which makes it easy to build more things on top of them, for example techniques to obtain statistical estimates of the quality of huge clusterings. This paper discussed in depth how the pointwise metrics can be used to evaluate an actual clustering with respect to a ground truth clustering, which is typically small but nonetheless a useful indicator of quality.
\end{itemize}

\bibliographystyle{plain}
\bibliography{main}

\appendix

\section{Proof that the $\mathit{JaccardDistance}$ is a true distance metric for clusterings}\label{appendix_jaccard_distance_true_metric}

This appendix contains a proof that the $\mathit{JaccardDistance}$ is a true distance metric for clusterings. 

We have a non-empty set of items $T$, where each item $i \in T$ is associated with a positive real $\mathit{weight}(i)$. For a set of items $I \subseteq T$, let $\mathit{weight}(I) = \sum_{i \in I} \mathit{weight}(i)$.

To get things started, we define:

$$\mathit{SymmetricDifference}(I_1, I_2) = \mathit{weight}((I_1 \setminus I_2) \cup (I_2 \setminus I_1))$$

\begin{lemma}
$(P(T), \mathit{SymmetricDifference})$ is a metric space, i.e. it satisfies the following properties:
\begin{enumerate}
    \item (Zero distance to self) $\mathit{SymmetricDifference}(I, I) = 0$.
    \item (Positivity) $I_1 \neq I_2$ implies $\mathit{SymmetricDifference}(I_1, I_2) > 0$.
    \item (Symmetry) $\mathit{SymmetricDifference}(I_1, I_2) = \mathit{SymmetricDifference}(I_2, I_1)$.
    \item (Triangle inequality) $\mathit{SymmetricDifference}(I_1, I_2) + \mathit{SymmetricDifference}(I_2, I_3) \geq \mathit{SymmetricDifference}(I_1, I_3)$.
\end{enumerate}
\end{lemma}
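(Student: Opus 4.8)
The plan is to verify the four listed properties straight from the definition. The first three are essentially immediate consequences of elementary set theory together with the hypothesis that every weight is a positive real; the triangle inequality is the only part that requires a real idea, namely a reduction to a pointwise statement. Throughout, write $I_1 \triangle I_2$ as shorthand for the symmetric difference $(I_1 \setminus I_2) \cup (I_2 \setminus I_1)$, so that $\mathit{SymmetricDifference}(I_1, I_2) = \mathit{weight}(I_1 \triangle I_2)$, and recall that $T$ is finite, so every weight-sum in sight is a finite sum of nonnegative terms.

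For property (1), $I \triangle I = \emptyset$ and $\mathit{weight}(\emptyset) = 0$ as an empty sum. For property (3), swapping the two arguments permutes the two set differences and leaves the union unchanged, so $I_1 \triangle I_2 = I_2 \triangle I_1$ and their weights agree. For property (2), if $I_1 \neq I_2$ then some item $i$ lies in exactly one of them, hence $i \in I_1 \triangle I_2$; since the weight of a set is the sum of the positive weights of its members, $\mathit{weight}(I_1 \triangle I_2) \geq \mathit{weight}(i) > 0$.

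For the triangle inequality I would use the set identity $I_1 \triangle I_3 = (I_1 \triangle I_2) \triangle (I_2 \triangle I_3)$, which holds because $\triangle$ is associative and commutative with $I_2 \triangle I_2 = \emptyset$; in particular $I_1 \triangle I_3 \subseteq (I_1 \triangle I_2) \cup (I_2 \triangle I_3)$. Combining this with two elementary facts about $\mathit{weight}$ — monotonicity ($A \subseteq B$ implies $\mathit{weight}(A) \leq \mathit{weight}(B)$, since the extra terms are nonnegative) and subadditivity ($\mathit{weight}(A \cup B) = \mathit{weight}(A) + \mathit{weight}(B) - \mathit{weight}(A \cap B) \leq \mathit{weight}(A) + \mathit{weight}(B)$) — yields $\mathit{weight}(I_1 \triangle I_3) \leq \mathit{weight}\bigl((I_1 \triangle I_2) \cup (I_2 \triangle I_3)\bigr) \leq \mathit{weight}(I_1 \triangle I_2) + \mathit{weight}(I_2 \triangle I_3)$, which is exactly property (4).

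An alternative route, which I find cleaner and which also makes (1)–(3) transparent, is to pass to indicator functions: for $I \subseteq T$ let $\chi_I : T \to \{0,1\}$ be its indicator, and note $i \in I_1 \triangle I_2$ iff $\chi_{I_1}(i) \neq \chi_{I_2}(i)$, so that $\mathit{SymmetricDifference}(I_1, I_2) = \sum_{i \in T} \mathit{weight}(i)\,\lvert \chi_{I_1}(i) - \chi_{I_2}(i)\rvert$. This exhibits $\mathit{SymmetricDifference}$ as the distance obtained by embedding $P(T)$ into $\mathbb{R}^T$ via $I \mapsto \chi_I$ and using the weighted $\ell_1$ norm, and all four properties then descend from the corresponding facts for $\mathbb{R}$ — the triangle inequality coming from $\lvert a - c\rvert \leq \lvert a - b\rvert + \lvert b - c\rvert$ applied at each $i$, multiplied by $\mathit{weight}(i)$, and summed. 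The main obstacle is therefore only conceptual: spotting the right set identity or the right embedding, after which every remaining step is routine.
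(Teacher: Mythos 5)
Your proof is correct, and its overall shape matches the paper's: properties (1)--(3) are verified exactly as in the paper, and the triangle inequality is reduced to a set containment followed by monotonicity and subadditivity of $\mathit{weight}$. The one genuine difference is how that containment is justified. The paper argues by inspecting the Venn diagram of $I_1, I_2, I_3$, showing that the union of the two symmetric differences equals $(I_1 \cup I_2 \cup I_3) \setminus (I_1 \cap I_2 \cap I_3)$, which in turn contains $I_1 \triangle I_3$. You instead invoke the algebraic identity $I_1 \triangle I_3 = (I_1 \triangle I_2) \triangle (I_2 \triangle I_3)$ (associativity of $\triangle$ plus $I_2 \triangle I_2 = \emptyset$), from which $I_1 \triangle I_3 \subseteq (I_1 \triangle I_2) \cup (I_2 \triangle I_3)$ is immediate; this is cleaner and avoids any appeal to a picture. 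Your second route, embedding $P(T)$ into $\mathbb{R}^T$ via indicator functions so that $\mathit{SymmetricDifference}$ becomes a weighted $\ell_1$ distance, is a genuinely different and more structural argument: it derives all four axioms at once from the pointwise triangle inequality $\lvert a - c \rvert \leq \lvert a - b \rvert + \lvert b - c \rvert$, and it generalizes immediately (e.g.\ to infinite weighted sums). Either version is a complete and valid proof of the lemma.
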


\begin{proof}
We prove each property in turn:
\begin{enumerate}
    \item $\mathit{SymmetricDifference}(I, I) = \mathit{weight}(\emptyset) = 0$.
    \item $I_1 \neq I_2$ implies $(I_1 \setminus I_2) \cup (I_2 \setminus I_1)$ is non-empty, and since all weights are positive, we have $\mathit{SymmetricDifference}(I_1, I_2) > 0$.
    \item $\mathit{SymmetricDifference}$ is symmetric because \\ $(I_1 \setminus I_2) \cup (I_2 \setminus I_1) = (I_2 \setminus I_1) \cup (I_1 \setminus I_2)$.
    \item $\mathit{SymmetricDifference}(I_1, I_2) + \mathit{SymmetricDifference}(I_2, I_3)$ \\
    $=$ \\
    $\mathit{weight}((I_1 \setminus I_2) \cup (I_2 \setminus I_1)) + \mathit{weight}((I_2 \setminus I_3) \cup (I_3 \setminus I_2))$ \\
    $\geq$ \\
    $\mathit{weight}((I_1 \setminus I_2) \cup (I_2 \setminus I_1) \cup (I_2 \setminus I_3) \cup (I_3 \setminus I_2))$ \\
    $=$ [Best seen by looking at the Venn diagram of $I_1$, $I_2$ and $I_3$] \\
    $\mathit{weight}((I_1 \cup I_2 \cup I_3) \setminus (I_1 \cap I_2 \cap I_3))$ \\
    $\geq$ [Again best seen by looking at the Venn diagram of $I_1$, $I_2$ and $I_3$] \\
    $\mathit{weight}((I_1 \setminus I_3) \cup (I_3 \setminus I_1))$ \\
    $=$ \\
    $\mathit{SymmetricDifference}(I_1, I_3)$
\end{enumerate}
\end{proof}

Next, we define the $\mathit{JaccardDistance}$ between two sets of items as follows:

$$\mathit{JaccardDistance}(I_1, I_2) = 1 - \frac{\mathit{weight}(I_1 \cap I_2)}{\mathit{weight}(I_1 \cup I_2)}$$

\begin{theorem}\label{jaccard_distance_on_sets_is_a_metric}
$(P(T), \mathit{JaccardDistance})$ is a metric space.
\end{theorem}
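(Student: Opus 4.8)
The plan is to recognize $\mathit{JaccardDistance}$ on $P(T)$ as a Steinhaus transform of the metric $\mathit{SymmetricDifference}$ from the Lemma, and then invoke the classical fact that the Steinhaus transform of a metric is again a metric. Recall that for a metric space $(X,d)$ and a fixed base point $a \in X$, the \emph{Steinhaus transform} is $$D_a(x,y) = \frac{2\, d(x,y)}{d(a,x) + d(a,y) + d(x,y)},$$ with the convention $D_a(a,a) = 0$, and that $(X, D_a)$ is a metric space. The first step is simply to record this fact, with a reference, since it does the heavy lifting.

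The second step is an algebraic identification. Take $X = P(T)$, $d = \mathit{SymmetricDifference}$ --- a metric by the Lemma --- and base point $a = \emptyset$. Using $\mathit{SymmetricDifference}(\emptyset, I) = \mathit{weight}(I)$ together with $\mathit{SymmetricDifference}(I_1, I_2) = \mathit{weight}(I_1) + \mathit{weight}(I_2) - 2\,\mathit{weight}(I_1 \cap I_2)$, the denominator of $D_\emptyset(I_1, I_2)$ collapses to $2\,\mathit{weight}(I_1 \cup I_2)$ and the numerator to $2\bigl(\mathit{weight}(I_1 \cup I_2) - \mathit{weight}(I_1 \cap I_2)\bigr)$, so that $$D_\emptyset(I_1, I_2) = 1 - \frac{\mathit{weight}(I_1 \cap I_2)}{\mathit{weight}(I_1 \cup I_2)} = \mathit{JaccardDistance}(I_1, I_2).$$ The only case needing care is $I_1 = I_2 = \emptyset$, which is exactly the base point: there the Steinhaus convention assigns the value $0$, and I would correspondingly read $\mathit{JaccardDistance}(\emptyset, \emptyset)$ as $0$ (a degenerate case that never occurs for clusters, which always contain their defining item). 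Since $\mathit{JaccardDistance}$ coincides with $D_\emptyset$, it inherits all four metric axioms, which proves the theorem.

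The main obstacle --- and the reason for routing through the Steinhaus transform rather than a bare-hands set computation --- is the triangle inequality. Zero distance to self, symmetry, and positivity (the last from positivity of $\mathit{SymmetricDifference}$, hence of the weights) are immediate, but the triangle inequality for $D_a$ does not fall out of a one-line manipulation of the triangle inequality for $d$: the naive bound that replaces both union denominators by $\mathit{weight}(I_1 \cup I_2 \cup I_3)$ points in the wrong direction. If a fully self-contained treatment is wanted, I would prove the triangle inequality for the Steinhaus transform in general --- clear denominators and verify the resulting polynomial inequality in the six pairwise $d$-distances among $a, I_1, I_2, I_3$ using the triangle inequalities of $d$ --- but in the write-up I would prefer to cite the transform and keep this theorem short.
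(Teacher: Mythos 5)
Your proposal is correct and follows essentially the same route as the paper: both identify $\mathit{JaccardDistance}$ as the Steinhaus transform of $\mathit{SymmetricDifference}$ with base point $\emptyset$ and verify the identification by the same algebraic simplification of numerator and denominator. Your explicit handling of the degenerate case $I_1 = I_2 = \emptyset$ and your remark that the triangle inequality is where the Steinhaus citation earns its keep are welcome additions the paper leaves implicit.
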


\begin{proof}
It follows from the Steinhaus Transform, which can be stated as:

A metric space $(X, d)$ and a constant point $p \in X$ induces a new metric $d^\prime$ defined as
$$d^\prime(x, y) = \frac{2d(x,y)}{d(x,p) + d(y,p) + d(x,y)}$$

For the proof at hand, we can use $d = \mathit{SymmetricDifference}$ and $p = \emptyset$ and expand the definition of $d^\prime$ to get:
\begin{align*}
 & \hphantom{=l} \frac{2 \cdot \mathit{SymmetricDifference}(x, y)}{\mathit{weight}(x) + \mathit{weight}(y) + \mathit{SymmetricDifference}(x, y)} \\
 & = \frac{2 \cdot \mathit{weight}((x \setminus y) \cup (y \setminus x))}{\mathit{weight}(x) + \mathit{weight}(y) + \mathit{weight}((x \setminus y) \cup (y \setminus x))} \\
 & = \frac{2 \cdot \mathit{weight}((x \setminus y) \cup (y \setminus x))}{\mathit{weight}(x) + \mathit{weight}(y) + \mathit{weight}(x \setminus y) + \mathit{weight}(y \setminus x)} \\
 & = \text{[Best seen by looking at the Venn diagram of $x$ and $y$]} \\
 & \hphantom{=l} \frac{2 \cdot \mathit{weight}((x \setminus y) \cup (y \setminus x))}{2 \cdot \mathit{weight}(x \cup y)} \\
 & =  \frac{\mathit{weight}(x \cup y) - \mathit{weight}(x \cap y)}{\mathit{weight}(x \cup y)} \\
 & =  1 - \frac{\mathit{weight}(x \cap y)}{\mathit{weight}(x \cup y)} \\
 & =  \mathit{JaccardDistance}(x, y)
\end{align*}
\end{proof}

Let $\mathrm{Clusterings}(T)$ denote the set of all clusterings of items in $T$, and let $C$, possibly sub-scripted, denote a clustering. Let $C(i) \subseteq T$ denote the cluster in $C$ that contains item $i$.

Every true distance metric on sets of weighted items can be lifted to clusterings, where the lifted metric is equal to the weighted average (i.e. the expected value), of the original metric over all items:

\begin{theorem}\label{lifting_of_true_metrics}
If $(P(T), d)$ is a metric space, then so is $(\mathrm{Clusterings}(T), d^\prime)$, where
$$d^\prime(C_1, C_2) = \frac{\sum_{i \in T} \mathit{weight}(i) \cdot d(C_1(i), C_2(i))}{\mathit{weight}(T)}$$
\end{theorem}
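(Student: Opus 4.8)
The plan is to verify the four metric-space axioms for $d'$ one at a time, reducing each to the corresponding property of $d$ evaluated on the clusters $C_1(i), C_2(i), C_3(i)$, and then noting that passing to the weighted average $\sum_{i \in T} \mathit{weight}(i)\,(\cdots)/\mathit{weight}(T)$ — a convex combination, since all $\mathit{weight}(i) > 0$ and $\mathit{weight}(T) > 0$ — preserves equalities, non-strict inequalities, and the one strict inequality that is needed.

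Before the axioms I would record two elementary facts. First, from the axioms for $d$ one has $d(A, B) \geq 0$ for all $A, B \subseteq T$ (indeed $0 = d(A, A) \leq d(A, B) + d(B, A) = 2\, d(A, B)$), so every term $\mathit{weight}(i)\, d(C_1(i), C_2(i))$ is non-negative and $d'$ is a well-defined non-negative real. Second, two clusterings of $T$ coincide if and only if $C_1(i) = C_2(i)$ for every $i \in T$, because a partition of $T$ is exactly the data of the map $i \mapsto C(i)$. With these in hand the four checks are short. \emph{Zero self-distance:} $d(C(i), C(i)) = 0$ for every $i$, so the numerator of $d'(C, C)$ vanishes. \emph{Identity of indiscernibles:} if $C_1 \neq C_2$ then by the second fact some $i_0 \in T$ has $C_1(i_0) \neq C_2(i_0)$, so $d(C_1(i_0), C_2(i_0)) > 0$ by positivity of $d$; as $\mathit{weight}(i_0) > 0$ and all other summands are $\geq 0$, the numerator is strictly positive, hence $d'(C_1, C_2) > 0$, while $C_1 = C_2$ gives $d' = 0$ by the previous item. \emph{Symmetry:} the numerators of $d'(C_1, C_2)$ and $d'(C_2, C_1)$ agree term by term since $d(C_1(i), C_2(i)) = d(C_2(i), C_1(i))$. \emph{Triangle inequality:} for each $i$ we have $d(C_1(i), C_2(i)) + d(C_2(i), C_3(i)) \geq d(C_1(i), C_3(i))$; multiplying by $\mathit{weight}(i) > 0$, summing over $T$, and dividing by $\mathit{weight}(T)$ gives the inequality for $d'$.

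I do not expect a genuine obstacle: the statement is the general principle that any pointwise metric lifts to clusterings via a fixed convex combination of its per-item values, and every axiom transfers term by term. The only step that is not pure bookkeeping is identity of indiscernibles, where one must invoke the trivial but essential observation that distinct partitions of $T$ assign different clusters to at least one item — this is what lets positivity of $d$ on $P(T)$ upgrade to positivity of $d'$ on $\mathrm{Clusterings}(T)$. Specializing $d = \mathit{JaccardDistance}$ via Theorem~\ref{jaccard_distance_on_sets_is_a_metric} then yields that the lifted $\mathit{JaccardDistance}$ is a true distance metric on $\mathrm{Clusterings}(T)$, which was the property asserted earlier.
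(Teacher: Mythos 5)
Your proof is correct and follows essentially the same route as the paper's: each metric axiom for $d^\prime$ is verified by reducing it pointwise to the corresponding axiom for $d$ and passing through the weighted average, with positivity hinging on the observation that distinct clusterings must assign different clusters to at least one item. Your preliminary remarks (non-negativity of $d$ from the axioms, and the identification of a partition with the map $i \mapsto C(i)$) are minor elaborations of steps the paper leaves implicit.
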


\begin{proof}
Assume $(P(T), d)$ is a metric space, i.e.
\begin{enumerate}
    \item (Zero distance to self) $d(I, I) = 0$.
    \item (Positivity) $I_1 \neq I_2$ implies $d(I_1, I_2) > 0$.
    \item (Symmetry) $d(I_1, I_2) = d(I_2, I_1)$.
    \item (Triangle inequality) $d(I_1, I_2) + d(I_2, I_3) \geq d(I_1, I_3)$.
\end{enumerate}
For $d^\prime$ we have to show:
\begin{enumerate}
    \item (Zero distance to self) $d^\prime(C, C) = 0$.
    \item (Positivity) $C_1 \neq C_2$ implies $d^\prime(C_1, C_2) > 0$.
    \item (Symmetry) $d^\prime(C_1, C_2) = d^\prime(C_2, C_1)$.
    \item (Triangle inequality) $d^\prime(C_1, C_2) + d^\prime(C_2, C_3) \geq d^\prime(C_1, C_3)$.
\end{enumerate}
We prove each property in turn:
\begin{enumerate}
    \item $$d^\prime(C, C) = \frac{\sum_{i \in T} \mathit{weight}(i) \cdot d(C(i), C(i))}{\mathit{weight}(T)} = \frac{\sum_{i \in T} \mathit{weight}(i) \cdot 0}{\mathit{weight}(T)} = 0$$
    \item Assume $C_1 \neq C_2$. Then for at least one $i \in T$ it must be the case that $C_1(i) \neq C_2(i)$, and from the Positivity of $d$ we know $d(C_1(i), C_2(i)) > 0$.  So the numerator of $d^\prime(C_1, C_2)$ will be positive and hence $d^\prime(C_1, C_2) > 0$.
    \item Symmetry holds trivially because $d$ is symmetric.
    \item \begin{align*}
              & \hphantom{=l} d^\prime(C_1, C_2) + d^\prime(C_2, C_3)  \\
              & = \frac{\sum_{i \in T} \mathit{weight}(i) \cdot d(C_1(i), C_2(i))}{\mathit{weight}(T)} + \frac{\sum_{i \in T} \mathit{weight}(i) \cdot d(C_2(i), C_3(i))}{\mathit{weight}(T)} \\
              & = \frac{\sum_{i \in T} \mathit{weight}(i) \cdot [d(C_1(i), C_2(i)) + d(C_2(i), C_3(i))]}{\mathit{weight}(T)} \\
              & \geq \frac{\sum_{i \in T} \mathit{weight}(i) \cdot d(C_1(i), C_3(i))}{\mathit{weight}(T)} \\
              & = d^\prime(C_1, C_3)
          \end{align*}
\end{enumerate}
\end{proof}

Hence we have the main result:

\begin{theorem}
$(\mathrm{Clusterings}(T), \mathit{JaccardDistance})$ is a metric space, where
$$\mathit{JaccardDistance}(C_1, C_2) = \frac{\sum_{i \in T} \mathit{weight}(i) \cdot \mathit{JaccardDistance}(C_1(i), C_2(i))}{\mathit{weight}(T)}$$
\end{theorem}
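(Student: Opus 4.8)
The statement is now a one-line consequence of the two preceding theorems, so the plan is simply to assemble them. First I would invoke Theorem~\ref{jaccard_distance_on_sets_is_a_metric}, which establishes that $(P(T), \mathit{JaccardDistance})$ is a metric space, where here $\mathit{JaccardDistance}$ is the set-level distance $\mathit{JaccardDistance}(I_1, I_2) = 1 - \mathit{weight}(I_1 \cap I_2)/\mathit{weight}(I_1 \cup I_2)$. Then I would apply Theorem~\ref{lifting_of_true_metrics} with $d = \mathit{JaccardDistance}$ to conclude that the lifted distance $d^\prime(C_1, C_2) = \frac{\sum_{i \in T} \mathit{weight}(i) \cdot \mathit{JaccardDistance}(C_1(i), C_2(i))}{\mathit{weight}(T)}$ makes $(\mathrm{Clusterings}(T), d^\prime)$ a metric space. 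This $d^\prime$ is precisely the formula in the statement, so we are done.

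The only point that requires a sentence of justification — and the closest thing to an "obstacle," though it is routine — is checking that this lifted $d^\prime$ really is the overall $\mathit{JaccardDistance}$ as it was defined earlier in the paper via the per-item confusion matrices. Concretely, one should observe that for each $i \in T$ the set-level quantity $\mathit{JaccardDistance}(C_1(i), C_2(i))$ coincides with the per-item $\mathit{JaccardDistance}(i)$: taking $C_1$ as the ideal clustering and $C_2$ as the actual clustering, we have $C_1(i) = \mathit{IdealCluster}(i)$ and $C_2(i) = \mathit{ActualCluster}(i)$, so $\mathit{weight}(C_1(i) \cap C_2(i)) = \mathit{TP}(i)$ and $\mathit{weight}(C_1(i) \cup C_2(i)) = \mathit{FN}(i) + \mathit{TP}(i) + \mathit{FP}(i)$, whence $1 - \mathit{TP}(i)/(\mathit{FN}(i)+\mathit{TP}(i)+\mathit{FP}(i)) = (\mathit{FN}(i)+\mathit{FP}(i))/(\mathit{FN}(i)+\mathit{TP}(i)+\mathit{FP}(i))$, matching the earlier definition. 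Averaging over $i$ with weights then matches $\mathit{JaccardDistance}(I)$ for $I = T$. Note the denominator $\mathit{weight}(C_1(i) \cup C_2(i))$ is always positive since $i$ itself lies in the union, so all per-item quantities are well defined.

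Since both ingredients — the Steinhaus-transform argument for the set-level metric and the averaging argument for the lift — have already been carried out in full in the earlier theorems, there is no remaining work beyond this identification; I expect the write-up to be three or four lines.
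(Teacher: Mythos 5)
Your proposal is correct and matches the paper's proof exactly: the paper also derives this theorem as a direct consequence of Theorems~\ref{jaccard_distance_on_sets_is_a_metric} and \ref{lifting_of_true_metrics}. Your extra check that the lifted set-level distance agrees with the per-item $\mathit{JaccardDistance}(i)$ defined via the confusion matrix is a nice bit of due diligence the paper leaves implicit, but it does not change the argument.
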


\begin{proof}
It is a direct consequence of Theorems~\ref{jaccard_distance_on_sets_is_a_metric} and \ref{lifting_of_true_metrics}.
\end{proof}

\end{document}